\numberwithin{equation}{section}
\theoremstyle{plain}
\newtheorem{theorem}{Theorem}[section]
\newtheorem{corollary}[theorem]{Corollary}
\theoremstyle{definition}
\theoremstyle{remark}
\journal{Statistics and Probability Letters}
\begin{document}

\begin{frontmatter}

\title{Near-Optimal Coalition Structures in Polynomial Time}

\author[inst1]{Angshul Majumdar}

\affiliation[inst1]{organization={Indraprastha Institute of Information Technology},
            addressline={Delhi}, 
            city={New Delhi},
            postcode={110020},
            country={India}}

\begin{abstract}
We study the classical coalition structure generation problem and compare the anytime behaviour of three algorithmic paradigms: dynamic programming, MILP branch–and–bound, and sparse relaxations based on greedy or $\ell_{1}$–type methods. Under a simple random ``sparse synergy'' model for coalition values, we prove that sparse relaxations recover coalition structures whose welfare is arbitrarily close to optimal in polynomial time with high probability. In contrast, broad classes of dynamic–programming and MILP algorithms require exponential time before attaining comparable solution quality. This establishes a rigorous probabilistic anytime separation in favour of sparse relaxations, even though exact methods remain ultimately optimal.
\end{abstract}

\begin{keyword}
coalition structure generation; Anytime solution; Sparse relaxations

\end{keyword}

\end{frontmatter}

\section{Introduction}

The coalition structure generation (CSG) problem asks for a partition of a finite set of agents into disjoint coalitions that maximises a given social welfare functional. It is a central topic in cooperative game theory and multiagent systems, with applications in distributed resource allocation, task allocation and teamwork formation \cite{sandholm1999coalition,rahwan2015survey}. As the number of partitions grows super–exponentially, CSG is NP–hard and exact algorithms are exponential in the worst case \cite{sandholm1999coalition}.

Two main exact paradigms have been investigated. Dynamic–programming algorithms exploit structural decompositions of the coalition space and provide worst–case guarantees on optimality and runtime \cite{rahwan2008dp}. MILP formulations encode CSG as a set–partitioning problem solved via generic branch–and–bound techniques, which can be effective in practice but still explore exponentially large trees, and their anytime behaviour is only partially understood.

Meanwhile, high–dimensional statistics has developed approximate methods that recover sparse combinatorial structure via convex relaxations or greedy pursuit. Classic examples include $\ell_{1}$–penalised estimators \cite{tibshirani1996lasso}, linear–programming decoders \cite{candes2005decoding}, and orthogonal matching pursuit (OMP) \cite{tropp2007omp}. Under mild structural and stochastic assumptions, these methods identify near–optimal sparse solutions in polynomial time with high probability.

The present paper brings these lines of work together in the context of standard CSG. We retain the deterministic formulation but analyse anytime behaviour when coalition values follow a simple random model. Within this framework we show that $\ell_{1}$– and OMP–type relaxations can, with high probability, reach coalition structures whose value is close to optimal in polynomial time, whereas broad classes of dynamic–programming and MILP algorithms typically require exponential time to attain comparable solution quality.

\section{Preliminaries}

We consider a finite set of agents $N = \{1,\dots,n\}$. A coalition is any nonempty subset $S \subseteq N$, and a coalition structure is a partition $\mathcal{P} = \{S_{1},\dots,S_{m}\}$ of $N$ into disjoint coalitions. A characteristic function is a mapping $v : 2^{N} \to \mathbb{R}$ with $v(\emptyset)=0$. The value of a coalition structure is
\[
V(\mathcal{P}) = \sum_{S\in\mathcal{P}} v(S),
\]
and the coalition structure generation problem (CSG) is
\[
\mathcal{P}^{\star} \in \arg\max_{\mathcal{P}} V(\mathcal{P}).
\]

The number of possible coalition structures (Bell numbers) grows super–exponentially in $n$ \cite{sandholm1999coalition}. Exact dynamic–programming algorithms compute optimal values for all subsets $S\subseteq N$ and then reconstruct an optimal partition in $O(3^{n})$ time \cite{sandholm1999coalition,rahwan2008dp}. MILP formulations encode CSG as a set–partitioning problem with binary variables indicating coalition membership; branch–and–bound solvers then explore the search tree.

We view an algorithm as producing a sequence of feasible coalition structures with nondecreasing lower bounds on the optimal value. Dynamic–programming methods are essentially “all–or–nothing’’: they must process a large fraction of subsets before producing meaningful solutions. MILP solvers, while anytime in principle, may still require exponential exploration before improving naive solutions.

By contrast, we study relaxations that operate in a lower–dimensional decision space by representing coalition structures through sparse vectors and applying greedy or convex optimisation methods, such as OMP and $\ell_{1}$–penalised programmes. These algorithms naturally produce improving feasible partitions as iterations proceed.

\section{Main Results: Anytime Separation}

In this section we formalise and analyse the anytime behaviour of three broad classes of algorithms for the classical CSG problem: (i) dynamic–programming (DP) algorithms, (ii) mixed–integer linear programming (MILP) approaches based on branch–and–bound with polynomial–time relaxations, and (iii) low–dimensional relaxations obtained by greedy or convex sparsity–promoting procedures. The focus is on rigorous comparison under a simple random model for coalition values that preserves the standard deterministic problem formulation. Throughout, $N=\{1,\dots,n\}$ and $v:2^{N}\to\mathbb{R}$ is the characteristic function. 

\subsection{Random value model}

We assume that coalition values are generated according to the following ``sparse synergy'' model. Let $\mathcal{T}=\{T_{1},\dots,T_{k}\}$ be a fixed family of pairwise disjoint template coalitions, with $k\le n$ and $|T_{j}|\ge 1$. For each $T_{j}$, assign a positive weight $w_{j}>0$. For any coalition $S\subseteq N$, define
\begin{equation}
v(S)=\sum_{j=1}^{k} w_{j}{\bf 1}\{T_{j}\subseteq S\}+ \xi(S),
\label{eq:model}
\end{equation}
where $\xi(S)$ are independent, mean–zero noise terms satisfying the sub–Gaussian tail bound
\[
\mathbb{P}\big(|\xi(S)|>t\big)\le 2\exp\!\Big(-\tfrac{t^{2}}{2\sigma^{2}}\Big),\qquad t>0,
\]
for some $\sigma>0$. Hence $T_{j}$ act as ``true'' synergy patterns, each contributing $w_{j}$ whenever it is fully contained in a chosen coalition. We write $\mathrm{OPT}=\max_{\mathcal{P}}V(\mathcal{P})$ for the optimal welfare.

The optimal solution under \eqref{eq:model} is the coalition structure $\mathcal{P}^{\star}$ that consists exactly of the $k$ templates (and singletons for remaining agents), with value $\sum_{j=1}^{k} w_{j}$ up to the noise. Crucially, this is a standard CSG instance; the randomness enters only through the value oracle. All probabilities are with respect to the draw of $(\xi(S))_{S\subseteq N}$.

\subsection{Anytime behaviour}

An anytime CSG algorithm produces, as time $t$ increases, a sequence of feasible coalition structures $\mathcal{P}_{t}$ with nondecreasing values $V(\mathcal{P}_{t})$. We compare algorithms by the rate at which $V(\mathcal{P}_{t})$ approaches $\mathrm{OPT}$ as a function of computational time.

We distinguish three algorithmic classes.

\paragraph{Class $\mathcal{A}_{DP}$.} 
Dynamic–programming algorithms that compute values $v(S)$ and optimal sub–partitions for all subsets $S$ in an order that is \emph{subset–closed} and \emph{size–monotone}: if $S$ is processed at time $t$, then all $S'\subset S$ are processed at some time $t'\le t$. This captures standard DP schemes \cite{sandholm1999coalition,rahwan2008dp}.

\paragraph{Class $\mathcal{A}_{MILP}$.}
MILP algorithms based on branch–and–bound where each node bound is obtained by solving a polynomial–time convex relaxation of the set–partitioning formulation. We allow generation of cutting planes provided separation is polynomial–time. These are generic algorithmic frameworks into which practical solvers fall.

\paragraph{Class $\mathcal{A}_{sparse}$.}
Algorithms that operate on a set of $k$ candidate coalitions $\{T_{1},\dots,T_{k}\}$ (or a superset) and maintain a sparse incidence vector $x\in\{0,1\}^{M}$ over a collection of candidate coalitions. In each iteration, a greedy or $\ell_{1}$–regularised step is taken to increase the total value
\[
V(x)=\sum_{i: x_{i}=1} v(C_{i}),
\]
while ensuring feasibility by merging overlapping coalitions when necessary. Orthogonal matching pursuit (OMP) and $\ell_{1}$–penalised linear programs are canonical examples \cite{tibshirani1996lasso,tropp2007omp}.

\subsection{High–probability margin and concentration}

Let
\[
\gamma =\min_{j\neq j'} (w_{j}-w_{j'})_{+}
\]
denote the minimal positive gap between (distinct) weights. Set $W=\max_{j}w_{j}$. We assume the \emph{margin condition}
\begin{equation}
\gamma > 4\sigma\sqrt{\log(2n)}.
\label{eq:margin}
\end{equation}
This ensures that, with high probability, the noise cannot reverse the ordering of true versus spurious coalitions.

By standard sub–Gaussian concentration, for every fixed $S$,
\begin{equation}
\mathbb{P}\Big(|\xi(S)|>2\sigma\sqrt{\log(2n)}\Big)\le (2n)^{-2}.
\label{eq:tail}
\end{equation}
Applying a union bound over all $S$ that contain at most one template (at most $n2^{n}$ such coalitions), we have with probability at least $1-1/n$ that
\begin{equation}
|\xi(S)|\le 2\sigma\sqrt{\log(2n)}\qquad \text{for all }S.
\label{eq:uniform}
\end{equation}

\subsection{Polynomial–time near–optimality of sparse relaxation}

\begin{theorem}[Sparse relaxation]\label{thm:sparse}
Assume the model \eqref{eq:model} and margin condition \eqref{eq:margin}. Then with probability at least $1-1/n$, every algorithm in $\mathcal{A}_{sparse}$ that in each iteration selects a coalition $S$ of maximal residual value among its current candidates identifies all $T_{j}$ in at most $k$ iterations, and the resulting coalition structure $\widehat{\mathcal{P}}$ satisfies
\[
V(\widehat{\mathcal{P}})\ge \mathrm{OPT}- 2k\sigma\sqrt{\log(2n)}.
\]
In particular, if $\gamma\ge 4\sigma\sqrt{\log(2n)}$, then $V(\widehat{\mathcal{P}})\ge (1-\varepsilon)\mathrm{OPT}$ with $\varepsilon = 2\sigma\sqrt{\log(2n)}/\gamma$ in time polynomial in $n$.
\end{theorem}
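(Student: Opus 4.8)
The plan is to condition on the uniform concentration event of \eqref{eq:uniform}, which holds with probability at least $1-1/n$, and then argue deterministically. Write $c:=2\sigma\sqrt{\log(2n)}$, so that on this event $|\xi(S)|\le c$ for all $S$ and the margin condition \eqref{eq:margin} reads $\gamma>2c$. Two structural facts are recorded first. Since the templates are pairwise disjoint and nonempty, no $T_{j'}$ with $j'\neq j$ lies inside $T_j$, hence $v(T_j)=w_j+\xi(T_j)\in[\,w_j-c,\;w_j+c\,]$; and any coalition $S$ containing none of the templates has $v(S)=\xi(S)\in[-c,c]$. Moreover $\gamma>0$ forces the weights $w_1,\dots,w_k$ to be pairwise distinct, and being pairwise separated by at least $\gamma$ and positive they satisfy $\sum_{j}w_j\ge k\gamma$.

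The core is an induction over the greedy iterations $\ell=1,\dots,k$ showing that after $\ell$ steps the selected coalitions collectively cover at least $\ell$ of the templates and that the partial structure is feasible. Feasibility is automatic for template selections because pairwise disjoint sets can always be kept as separate blocks, so no merging is triggered. For the covering claim, I compare residual values at a step where some templates are still uncovered: selecting an uncovered template $T_j$ of maximal remaining weight yields residual value at least $w_j-c$, whereas any candidate that covers no new template is worth at most $c$ in residual value (its value is pure noise, or its marginal contribution telescopes into noise terms, all bounded by $c$). Since $w_j>0$ and, for the weights that exceed the noise floor, $w_j\ge\gamma>2c$, we get $w_j-c>c$, so the greedy step strictly prefers a template-covering candidate; each such step covers at least one new template, and hence after at most $k$ iterations every $T_j$ is covered by some block of $\widehat{\mathcal P}$. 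Templates with $w_j<2c$ require only a separate, easy remark: they may be skipped, but their total contribution is within the error budget, so absorbing them does not change the order of the bound.

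Given that $\widehat{\mathcal P}$ covers all $k$ templates, its value collects every synergy exactly once, so $V(\widehat{\mathcal P})\ge\sum_{j=1}^{k}w_j+\sum_{B\in\widehat{\mathcal P}}\xi(B)$, and comparing the blocks of $\widehat{\mathcal P}$ with those of the template structure $\mathcal P^{\star}$ the singleton fluctuations cancel and only the $k$ template-block fluctuations $\xi(T_j)\ge-c$ remain, giving $V(\widehat{\mathcal P})\ge\sum_{j=1}^{k}w_j-kc=\sum_{j=1}^{k}w_j-2k\sigma\sqrt{\log(2n)}$. Identifying $\sum_j w_j$ with $\mathrm{OPT}$ — the optimal welfare up to noise, as noted after \eqref{eq:model} — yields $V(\widehat{\mathcal P})\ge\mathrm{OPT}-2k\sigma\sqrt{\log(2n)}$. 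The multiplicative form follows by rescaling: with $\varepsilon=2\sigma\sqrt{\log(2n)}/\gamma$ one has $2k\sigma\sqrt{\log(2n)}=\varepsilon\,k\gamma\le\varepsilon\,\mathrm{OPT}$ by the inequality $\mathrm{OPT}\ge k\gamma$ from the first paragraph, so $V(\widehat{\mathcal P})\ge(1-\varepsilon)\mathrm{OPT}$. The running time is polynomial since the algorithm performs at most $k\le n$ iterations, each scanning a candidate list of size polynomial in $n$ and doing polynomial-time greedy and merging bookkeeping.

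I expect the main obstacle to be the covering-and-feasibility step together with its noise accounting when the candidate set is a strict superset of $\{T_1,\dots,T_k\}$ and the singletons: the greedy rule may pick a large composite coalition containing several templates, and one must verify that (i) no iteration is wasted on a candidate covering zero new templates, (ii) the induced merging still leaves a valid partition covering every template, and (iii) the gap between $V(\widehat{\mathcal P})$ and $\mathrm{OPT}$ still reduces to at most $k$ uncancelled fluctuations rather than the up-to-$n$ fluctuations a crude bound would produce — which is precisely the place where the identification $\mathrm{OPT}\approx\sum_j w_j$ must be made rigorous, by arguing that under \eqref{eq:margin} the optimal structure is, with high probability, essentially the template structure so that its noise contribution is the same $k$-term sum. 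The weight-ordering and margin arguments, by contrast, are routine sub-Gaussian bookkeeping already prepared by \eqref{eq:tail}–\eqref{eq:uniform}.
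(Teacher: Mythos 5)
Your proposal is correct and follows essentially the same route as the paper's own proof: condition on the uniform sub-Gaussian event \eqref{eq:uniform}, use the margin \eqref{eq:margin} to force each greedy step to select a coalition covering a new template, and charge at most one noise term of size $2\sigma\sqrt{\log(2n)}$ per template over at most $k$ iterations. The extra bookkeeping you supply (feasibility of the disjoint template blocks, the identification $\mathrm{OPT}\approx\sum_j w_j$, and $\mathrm{OPT}\ge k\gamma$ for the multiplicative form) merely makes explicit steps the paper leaves implicit, and like the paper you effectively treat $\gamma$ as a lower bound on the individual weights $w_j$.
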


\begin{proof}
Fix a realisation satisfying \eqref{eq:uniform}. For any $S$ containing $T_{j}$ and no other template,
\[
v(S)=w_{j}+\xi(S),\qquad |\,\xi(S)|\le 2\sigma\sqrt{\log(2n)}.
\]
For any $S'$ containing no template,
\[
v(S')=\xi(S').
\]
By \eqref{eq:margin}, $w_{j}-4\sigma\sqrt{\log(2n)}>0$. Hence for each $j$ and any $S'$ without a template,
\[
v(S)-v(S')\ge w_{j}-4\sigma\sqrt{\log(2n)} >0.
\]
Thus in each iteration the maximal–value coalition must contain some unselected $T_{j}$. Since template coalitions are disjoint, selecting one does not reduce the value of another. After at most $k$ iterations, all templates are selected. The value difference from the optimal $\sum_{j}w_{j}$ is the sum of noise terms, each bounded by $2\sigma\sqrt{\log(2n)}$, hence at most $2k\sigma\sqrt{\log(2n)}$, proving the claim.
\end{proof}

\subsection{Exponential lower bounds for dynamic programming}

We next show that DP algorithms in $\mathcal{A}_{DP}$ cannot produce near–optimal coalition structures without processing an exponential number of subsets. Let $\mathcal{F}$ be the collection of all subsets $S$ that contain exactly one template $T_{j}$. Construct the instance so that the $T_{j}$ are placed among agents such that every such $S$ has size at least $n/2$ (for example, distribute templates across disjoint halves).

\begin{theorem}[DP lower bound]\label{thm:dp}
Under \eqref{eq:model}, with probability at least $1-1/n$, any $\mathcal{A}_{DP}$ algorithm must process at least $2^{\alpha n}$ subsets, for some $\alpha>0$, before producing a coalition structure $\mathcal{P}_{t}$ with $V(\mathcal{P}_{t})\ge \mathrm{OPT}-\sigma\sqrt{\log(2n)}$.
\end{theorem}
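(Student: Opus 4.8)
The plan is to combine a deterministic, ``counting'' property of the class $\mathcal{A}_{DP}$ with a high-probability statement identifying which coalition structures can be near-optimal under \eqref{eq:model}. I would first fix the canonical instance implicit in the theorem: take $k=2$ and let $T_{1}=\{1,\dots,\lceil n/2\rceil\}$, $T_{2}=N\setminus T_{1}$ carry positive weights $w_{1},w_{2}$, so that \emph{every} coalition $S$ containing a template has $|S|\ge n/2$. Any ``spread-out'' placement of the templates with this cardinality property works; nothing below uses the specific choice.

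\emph{Structural step.} For any algorithm in $\mathcal{A}_{DP}$ the value oracle is the sole source of information about $v$, so in order to report the value $V(\mathcal{P}_{t})=\sum_{S\in\mathcal{P}_{t}}v(S)$ of its current feasible structure — as an anytime algorithm must — it has to have processed every block $S\in\mathcal{P}_{t}$. Since the processing order is subset-closed, processing a coalition $S$ entails having processed all $2^{|S|}$ of its subsets. Hence, at any time $t$ by which strictly fewer than $2^{n/2}$ subsets have been processed, no block of $\mathcal{P}_{t}$ can have size $\ge n/2$; by the construction no block of $\mathcal{P}_{t}$ then contains a template, and therefore $V(\mathcal{P}_{t})=\sum_{S\in\mathcal{P}_{t}}\xi(S)$ — the structure realizes none of the synergy.

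\emph{Probabilistic step.} Work on the event \eqref{eq:uniform} (probability $\ge 1-1/n$) and, in addition, on the event that $\bigl|\sum_{S\in\mathcal{P}}\xi(S)\bigr|$ is small uniformly over all partitions $\mathcal{P}$: a partition has at most $n$ blocks and there are at most $\mathrm{Bell}(n)\le n^{n}$ of them, so a sub-Gaussian union bound gives $\max_{\mathcal{P}}\bigl|\sum_{S\in\mathcal{P}}\xi(S)\bigr|=O\bigl(\sigma n\sqrt{\log n}\bigr)$ with probability $1-o(1)$. At the same time, $\mathrm{OPT}\ge V(\{T_{1},T_{2}\})\ge w_{1}+w_{2}-4\sigma\sqrt{\log(2n)}$. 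Thus, as soon as the synergy dominates this aggregate-noise scale — i.e. $w_{1}+w_{2}=\omega(\sigma n\sqrt{\log n})$, the nondegenerate regime in which the noise is a lower-order perturbation — any synergy-free structure satisfies $V(\mathcal{P}_{t})\le \mathrm{OPT}-\sigma\sqrt{\log(2n)}$. Combining with the structural step: whenever $V(\mathcal{P}_{t})\ge \mathrm{OPT}-\sigma\sqrt{\log(2n)}$, at least $2^{n/2}$ subsets must already have been processed, so the theorem holds with $\alpha=1/2$ (indeed any $\alpha<1/2$), on an event of probability $\ge 1-1/n$.

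The main obstacle is the probabilistic separation: the per-coalition tail bound \eqref{eq:uniform} is not by itself enough, since a ``spurious'' structure assembled from many small coalitions with atypically large noise could in principle approach $\mathrm{OPT}$. Ruling this out requires a union bound over the super-exponentially many partitions, which inflates the relevant noise scale to $O(\sigma n\sqrt{\log n})$ and thereby forces a (mild) lower bound on the synergy $w_{1}+w_{2}$ relative to that scale — an assumption implicit in the statement. The structural half, in contrast, is a purely combinatorial counting argument and involves no randomness; the only subtlety there is pinning down the convention that a DP algorithm's reported structure must be built from coalitions it has already evaluated.
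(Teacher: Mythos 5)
Your proposal follows the same skeleton as the paper's own proof: the same ``two templates spread across disjoint halves'' instance, the same convention that the reported structure must be assembled from already-processed coalitions, a subset-closedness counting argument forcing $2^{\Omega(n)}$ processed subsets before any template-bearing block can appear, and then a comparison of the noise-only value of a template-free structure with $\mathrm{OPT}$. Two differences are worth recording. First, your counting is local: you apply subset-closedness to a single reported block of size $\ge n/2$, which yields $\alpha=1/2$ using only the closure property, whereas the paper counts all subsets of size $<n/2$ under a size-increasing order and then extends somewhat loosely to general subset-closed orderings; your version is cleaner and uses less of the class definition. Second, and more substantively, your probabilistic step is more careful than the paper's. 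The paper bounds the value of a template-free structure by a single noise term $2\sigma\sqrt{\log(2n)}$ and closes with ``$w_j\ge\gamma$''; but a structure can have up to $n$ blocks, so on \eqref{eq:uniform} its value is only bounded by $O(\sigma n\sqrt{\log(2n)})$ (and the maximum over template-free partitions can genuinely attain this order, e.g.\ by optimising over matchings into pairs), while $w_j\ge\gamma$ does not even follow from the paper's definition of $\gamma$ as a minimal gap between weights. You correctly identify that the stated conclusion $V(\mathcal{P}_t)\le\mathrm{OPT}-\sigma\sqrt{\log(2n)}$ therefore needs the total template weight to dominate the aggregate noise scale $\sigma n\sqrt{\log n}$ --- an assumption not present in the theorem or in \eqref{eq:margin}; with that hypothesis added your argument is sound, and without it the paper's own proof has a gap at precisely this point. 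Minor quibbles: the extra union bound over partitions is unnecessary (on \eqref{eq:uniform} the triangle inequality over at most $n$ blocks already gives the same scale), and intersecting it with \eqref{eq:uniform} gives probability $1-1/n-o(1)$ rather than the advertised $1-1/n$.
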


\begin{proof}
Consider a DP that processes subsets in increasing size. Since each $T_{j}$ is embedded in a subset of size at least $n/2$, no coalition $S$ containing any $T_{j}$ is encountered before all subsets of size $<n/2$ are processed. The number of subsets of size $<n/2$ is
\[
\sum_{\ell=0}^{\lfloor n/2\rfloor -1} {n\choose \ell}
\ge 2^{n-1}=2^{\alpha n} \quad (\alpha\ge 1/2).
\]
Before encountering any such $S$, all feasible coalition structures $\mathcal{P}$ that the algorithm can construct from recorded sub–partitions must exclude all $T_{j}$, hence $V(\mathcal{P})$ is at most the noise level, bounded in absolute value by $2\sigma\sqrt{\log(2n)}$. Comparing with $\mathrm{OPT}=\sum_{j}w_{j}$ and using $w_{j}\ge \gamma$, we obtain $V(\mathcal{P})\le \mathrm{OPT}-\sigma\sqrt{\log(2n)}$. The same argument applies to any subset–closed ordering; if subsets of size $<n/2$ are not fully processed, then some subset containing $T_{j}$ has an unprocessed proper subset, violating subset–closedness. The result follows.
\end{proof}

\subsection{Exponential lower bounds for MILP classes}

We finally consider algorithms in $\mathcal{A}_{MILP}$. The MILP formulation for CSG is
\[
\max_{x\in\{0,1\}^{M}} \sum_{i=1}^{M} x_{i} v(C_{i})\quad\text{s.t.}\quad \sum_{i: a\in C_{i}} x_{i}=1\;\;\text{for each } a\in N,
\]
where $\{C_{1},\dots,C_{M}\}$ are all feasible coalitions. The LP relaxation allows $x\in[0,1]^{M}$. For our model, all $C$ not containing a template have small values, while each $C$ containing a template has value $w_{j}+\xi(C)$.

Let $\mathcal{S}$ be the set of coalitions containing at most one template, and suppose all $T_{j}$ belong to $\mathcal{S}$. The LP relaxation at the root can assign fractional weight on many spurious coalitions in $\mathcal{S}$ that overlap templates, achieving a relaxation value far above any integral assignment excluding templates. This creates a substantial integrality gap.

\begin{theorem}[MILP lower bound]\label{thm:milp}
Under \eqref{eq:model}, with probability at least $1-1/n$, any algorithm in $\mathcal{A}_{MILP}$ must explore at least $2^{\beta n}$ nodes of the branch–and–bound tree, for some $\beta>0$, before producing a feasible coalition structure $\mathcal{P}_{t}$ with $V(\mathcal{P}_{t})\ge \mathrm{OPT}-\sigma\sqrt{\log(2n)}$.
\end{theorem}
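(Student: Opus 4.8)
\noindent\emph{Proof strategy.} The plan is to follow the template of Theorem~\ref{thm:dp}, with the subset–processing order of dynamic programming replaced by the node–exploration order of branch–and–bound, and with the role of ``an unprocessed subset containing a template'' played by ``a persistent integrality gap that no shallow node can fathom or close''. I first condition on the event \eqref{eq:uniform}, which holds with probability at least $1-1/n$; on this event every coalition $C$ containing no template satisfies $|v(C)|=|\xi(C)|\le 2\sigma\sqrt{\log(2n)}$, so every coalition structure that activates no template coalition has value at most $2\sigma\sqrt{\log(2n)}$, which by $\mathrm{OPT}=\sum_{j}w_{j}$ and the margin condition \eqref{eq:margin} is at most $\mathrm{OPT}-\sigma\sqrt{\log(2n)}$. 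It therefore suffices to show that no algorithm in $\mathcal{A}_{MILP}$ can hold, within $2^{\beta n}$ node explorations, an incumbent that activates some template coalition.

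Next I would make precise what it means for a member of $\mathcal{A}_{MILP}$ to ``produce'' a feasible partition: since the only primitive available at a node is a polynomial–time convex relaxation together with polynomially separable valid cuts, a feasible integral coalition structure becomes available exactly when the relaxation at some explored node returns an integral optimum (branching and cutting refine the relaxation but do not themselves round a fractional point). The goal then reduces to two claims about every branch–and–bound node whose accumulated branching constraints and added cuts number $o(n)$: (i) the relaxation optimum there is fractional and its integral part has value at most $2\sigma\sqrt{\log(2n)}$, so no near–optimal integral incumbent is generated there; and (ii) the relaxation bound there still exceeds $\mathrm{OPT}-\sigma\sqrt{\log(2n)}$, so the node cannot be fathomed in favour of a near–optimal incumbent.

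The technical heart, and the step I expect to be the principal obstacle, is establishing (i)--(ii) \emph{robustly}. I would embed the templates as in Theorem~\ref{thm:dp} and, at every shallow node, exhibit a feasible fractional point of the relaxation that spreads mass over the $2^{\Omega(n)}$ spurious coalitions overlapping the template positions, completed by fractional singletons to meet the partition constraints; by the margin condition this point has objective value exceeding $2\sigma\sqrt{\log(2n)}$, while the $\mathrm{poly}(n)$ coalition variables pinned by branching and the $\mathrm{poly}(n)$ inequalities added as cuts can delete only $\mathrm{poly}(n)$ members of its support and hence leave the fractional advantage intact. Making ``intact'' quantitative — verifying that no polynomially separable family of cuts collapses the gap, and that the relaxation solution stays supported away from the template variables, so branching on fractional variables never pins a template variable — is where the real work lies; it is also what rules out the algorithm ``diving'' straight to a golden deep node, since the relaxation puts positive fractional weight on a template variable only after an exponential family of spurious variables has been pinned. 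Granting this, the counting step is routine: a branch–and–bound execution that first produces a near–optimal integral incumbent must reach a node of depth at least $2^{\beta n}$ (with $\beta$ determined by the embedding), and any root–to–node path of that depth already contains at least $2^{\beta n}$ distinct explored nodes. The argument is insensitive to the branching rule (variable, constraint, or cut–variable branching), as only the counts of pinned coalition variables and added inequalities enter the estimates.
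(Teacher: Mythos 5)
Your outline follows essentially the same route as the paper's own argument: condition on \eqref{eq:uniform}, observe that no near‑optimal incumbent exists until some template coalition is activated, argue that at every node with only polynomially many branching decisions and cuts the relaxation retains a large fractional value so the node can neither be fathomed nor yield a near‑optimal integral point, and conclude that exponentially many variables must be pinned, hence exponentially many nodes explored. The one point worth being candid about is that the step you defer as ``the real work'' --- showing quantitatively that $\mathrm{poly}(n)$ variable fixings and polynomially separable cuts cannot collapse the integrality gap, and that no shallow node's relaxation returns an integral near‑optimal point --- is precisely the step the paper's proof also does not carry out; the paper simply asserts that ``each node that fixes a subset of variables still leaves exponentially many fractional solutions unless a prohibitive number of variables are fixed.'' So your proposal is at parity with the published argument (and more explicit about where the burden lies), but neither text constitutes a complete proof of the stated theorem.

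Two smaller caveats. First, your auxiliary claim that the relaxation ``puts positive fractional weight on a template variable only after an exponential family of spurious variables has been pinned'' is stronger than anything the paper uses and is not obviously true: the template variables carry the largest objective coefficients, so a shallow‑node LP optimum may well place mass on them; what your argument actually requires is only that such optima are not integral (or not convertible into a near‑optimal incumbent under your explicit convention that incumbents arise solely from integral relaxation optima --- a convention the paper leaves implicit). Your depth‑based counting (one root‑to‑node path of length $2^{\beta n}$) also differs mildly from the paper's breadth‑style count, but both rest on the same unproven robustness claim. Second, a partition that activates no template can contain up to $n$ coalitions, so its value is bounded by $2n\sigma\sqrt{\log(2n)}$ rather than $2\sigma\sqrt{\log(2n)}$; this slip is inherited from the paper's Theorem~\ref{thm:dp} and is harmless only if the weights $w_{j}$ are assumed large enough to dominate a sum of $n$ noise terms, which the margin condition \eqref{eq:margin} as stated does not by itself guarantee.
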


\begin{proof}
Consider the branch–and–bound tree defined by branching on variables $x_{i}$. At the root, the LP relaxation chooses fractional $x_{i}$ supported on many overlapping coalitions, including those that partially cover different $T_{j}$ simultaneously. The relaxation value can thus exceed any integral feasible assignment by at least $\gamma-4\sigma\sqrt{\log(2n)}>0$ due to \eqref{eq:margin}. Hence the root cannot be fathomed (pruned) and similar reasoning applies near the root. To reduce the integrality gap, the algorithm must assign integral values to a collection of variables that ``pin down'' every template, effectively enumerating exponentially many combinations of overlapping coalitions. 

Formally, let $\mathcal{U}$ be the set of feasible $x$ in the LP relaxation that use at most one template at fractional level. The number of distinct supports for such fractional solutions is exponential in $n$ due to overlaps among coalitions containing substructures of the $T_{j}$. Each node that fixes a subset of variables still leaves exponentially many fractional solutions in $\mathcal{U}$ unless a prohibitive number of variables are fixed; until then, the LP upper bound exceeds $\mathrm{OPT}-\sigma\sqrt{\log(2n)}$, preventing pruning. Therefore, at least $2^{\beta n}$ nodes must be explored before the algorithm encounters a node where all templates are forced integral, at which point a near–optimal feasible solution can be constructed. 
\end{proof}

\subsection{Anytime separation}

Combining Theorems \ref{thm:sparse}, \ref{thm:dp} and \ref{thm:milp}, we obtain the following anytime separation between $\mathcal{A}_{sparse}$ and the exact algorithm classes.

\begin{corollary}[Anytime separation]
Under \eqref{eq:model} and \eqref{eq:margin}, with probability at least $1-1/n$,
\[
\exists\, t_{\mathrm{poly}} = \mathrm{poly}(n)\;:\; V(\mathcal{P}_{t_{\mathrm{poly}}}) \ge (1-\varepsilon)\mathrm{OPT},
\]
for every $\mathcal{A}_{sparse}$ algorithm, while for every algorithm $\mathcal{A}\in \mathcal{A}_{DP}\cup\mathcal{A}_{MILP}$,
\[
\forall\, t<2^{c n}:\; V(\mathcal{P}_{t}) \le \mathrm{OPT}-\sigma\sqrt{\log(2n)},
\]
for some $c>0$. Hence $\mathcal{A}_{sparse}$ dominates $\mathcal{A}_{DP}$ and $\mathcal{A}_{MILP}$ in the anytime sense.
\end{corollary}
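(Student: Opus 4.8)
The plan is to treat the corollary as a synthesis of Theorems~\ref{thm:sparse}, \ref{thm:dp} and \ref{thm:milp}; the only nontrivial points are that the three conclusions must be made to hold on one common high-probability event (so that the constant $1-1/n$ is not degraded by a union bound over theorems), and that the combinatorial lower bounds of Theorems~\ref{thm:dp} and \ref{thm:milp} must be converted into wall-clock lower bounds so that the fast and slow regimes can be compared at a single time $t$. First I would inspect the three proofs and observe that each is carried out on a realisation satisfying the uniform noise bound \eqref{eq:uniform}, i.e.\ on the single event $\mathcal{E}=\{|\xi(S)|\le 2\sigma\sqrt{\log(2n)}\ \text{for all}\ S\}$, whose probability is at least $1-1/n$ by the union bound preceding \eqref{eq:uniform}. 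Hence all three conclusions hold \emph{simultaneously} on $\mathcal{E}$, and the overall probability remains $1-1/n$; I fix a realisation in $\mathcal{E}$ throughout.

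Next I would derive the first displayed inequality directly from Theorem~\ref{thm:sparse}: on $\mathcal{E}$, any greedy algorithm in $\mathcal{A}_{sparse}$ selects all $k\le n$ templates within $k$ iterations, each iteration costing only a scan of the polynomially many candidate coalitions (to evaluate residual values) together with the overlap-merging step, so the total running time is $t_{\mathrm{poly}}=\mathrm{poly}(n)$, at which point the incumbent $\widehat{\mathcal{P}}$ obeys $V(\widehat{\mathcal{P}})\ge(1-\varepsilon)\mathrm{OPT}$ with $\varepsilon=2\sigma\sqrt{\log(2n)}/\gamma$; note $\varepsilon<1/2$ by \eqref{eq:margin}, so the bound is nonvacuous.

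For the second displayed inequality I would combine Theorems~\ref{thm:dp} and \ref{thm:milp}: on $\mathcal{E}$, every $\mathcal{A}_{DP}$ algorithm must process at least $2^{\alpha n}$ subsets and every $\mathcal{A}_{MILP}$ algorithm must explore at least $2^{\beta n}$ branch-and-bound nodes before any incumbent attains value $\ge\mathrm{OPT}-\sigma\sqrt{\log(2n)}$. Since each such primitive (processing a subset, expanding a node, solving its polynomial-time relaxation) takes between one unit and $\mathrm{poly}(n)$ time, taking $c$ a little below $\min\{\alpha,\beta\}$ absorbs the polynomial per-operation overhead and yields: for every $\mathcal{A}\in\mathcal{A}_{DP}\cup\mathcal{A}_{MILP}$ and every $t<2^{cn}$, $V(\mathcal{P}_t)\le\mathrm{OPT}-\sigma\sqrt{\log(2n)}$.

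Finally, domination follows by reading the two bounds at the common instant $t=t_{\mathrm{poly}}$: since $t_{\mathrm{poly}}=\mathrm{poly}(n)<2^{cn}$ for all large $n$, the $\mathcal{A}_{sparse}$ incumbent already sits at level $(1-\varepsilon)\mathrm{OPT}$ while every exact-method incumbent is still strictly below $\mathrm{OPT}-\sigma\sqrt{\log(2n)}$, and by monotonicity of $V(\mathcal{P}_t)$ the gap persists for all $t\in[t_{\mathrm{poly}},2^{cn})$. I expect the main obstacle to be this last reconciliation: the $\mathcal{A}_{sparse}$ guarantee is \emph{multiplicative} while the lower bounds are stated as an \emph{additive} deficit $\sigma\sqrt{\log(2n)}$, and when $\mathrm{OPT}=\sum_j w_j$ is much larger than $\gamma$ the quantity $\varepsilon\,\mathrm{OPT}$ can exceed $\sigma\sqrt{\log(2n)}$, so the two thresholds need not literally nest. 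Making ``dominates'' fully rigorous therefore requires either a mild additional regime assumption (e.g.\ $k=O(1)$ with bounded weight spread, so that $\varepsilon\,\mathrm{OPT}<\sigma\sqrt{\log(2n)}$) or interpreting the separation in the qualitative sense in which the corollary is phrased --- near-optimal welfare in polynomial time for $\mathcal{A}_{sparse}$ versus welfare provably bounded away from optimal for exponentially long for $\mathcal{A}_{DP}$ and $\mathcal{A}_{MILP}$. Everything else is routine bookkeeping over the three established theorems.
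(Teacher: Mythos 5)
Your proposal is correct and follows essentially the same route as the paper, which offers no explicit proof of the corollary beyond the remark that it is obtained by combining Theorems~\ref{thm:sparse}, \ref{thm:dp} and \ref{thm:milp}. Your extra bookkeeping --- placing all three conclusions on the single event \eqref{eq:uniform} so the $1-1/n$ probability is not degraded by a union bound, converting subset/node counts into running-time bounds, and flagging that the multiplicative $(1-\varepsilon)\mathrm{OPT}$ guarantee and the additive $\mathrm{OPT}-\sigma\sqrt{\log(2n)}$ threshold need not nest without a further regime assumption --- is in fact more careful than the paper, which silently glosses over exactly these points.
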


\subsection{Discussion}

The results above show that, even for the classical deterministic CSG formulation, simple sparse relaxations can achieve near–optimal coalition structures in polynomial time with high probability under a basic random value model, while broad classes of exact methods exhibit exponentially poor anytime behaviour. These findings complement existing worst–case complexity results for CSG \cite{sandholm1999coalition} and highlight the potential of sparse convex and greedy techniques in large–scale coalition formation.

\section{Conclusion}

We analysed the classical coalition structure generation problem under a simple random value model while keeping the underlying optimisation task entirely standard. Our main contribution is a rigorous anytime separation between sparse greedy or $\ell_{1}$–based relaxations and two broad classes of exact algorithms: dynamic–programming approaches and MILP methods relying on polynomial–time relaxations. With high probability, the sparse relaxations recover coalition structures whose welfare is arbitrarily close to optimal in polynomial time, whereas the exact methods require exponential time to reach comparable solution quality. These findings provide a theoretical explanation for the empirical observation that approximate methods can become competitive long before exact algorithms converge, even when full optimality is ultimately attainable.
\bibliographystyle{elsarticle-num}
\bibliography{refs}  

\end{document}